
\documentclass{svproc}
\usepackage{bm}
\usepackage{amsmath}
\usepackage{amsfonts}
\usepackage{graphicx}
\usepackage[colorinlistoftodos]{todonotes}
\usepackage{yhmath}  
\usepackage{mathtools}
\usepackage{tikz}
\usepackage{afterpage}
\usetikzlibrary{shapes,arrows}
\pagestyle{plain}
\pagenumbering{gobble}
\usepackage{indentfirst}
\usepackage[hidelinks]{hyperref}  
\usepackage[sectionbib,numbers]{natbib}



\begin{document}

\title{Analytic characterization of stability islands on two point vortex systems}

\author{
Gil M. Marques$^{1}$, Sílvio Gama$^{1}$ and Fernando L. Pereira$^{2}$}

\maketitle
\thispagestyle{empty}
\pagestyle{empty}
\setcounter{footnote}{1}
\footnotetext{Centro de Matemática da Universidade do Porto, Departamento de Matemática, Faculdade de Ciências, Universidade do Porto, Rua do Campo Alegre s/n, 4169-007, Porto, Portugal\\
        }
\setcounter{footnote}{2}
\footnotetext{{SYSTEC - Departamento de Engenharia Eletrotécnica e de Computadores, Faculdade de Engenharia da Universidade do Porto, Rua Dr. Roberto Frias, s/n, 4200-465, Porto, Portugal}}



\keywords{Point vortex, Vortex dynamics, Fluid dynamics}


\begin{abstract}
In a system of point vortices, there exist regions of stability around each vortex, even if the system is chaotic. These regions are usually called stability islands and they have a morphology that is hard to characterise. We study and characterise them in two point vortex systems in the infinite two-dimensional plane - the simplest scenario - by studying the dynamics of passive particles in these environments. We present computations for the perimeter and area of these islands and highlight the analytical expressions that define their boundary.

\end{abstract}
\section{Introduction}


Point vortices are singular solutions of the two-dimensional (2D) incompressible Euler equations which were first studied by Helmholtz~\cite{Helmholtz}, having, some years later, received the attention of Lord Kelvin~\cite{Kelvin1869} and Kirchhoff~\cite{Kirchhoff1876}. Since then, and until the present day, point vortices have become the subject of research in domains as varied as torus~\cite{Makoto2006,Stremler2010}, spheres~\cite{crowdy2006point,newton2010n,mokhov2020point}, hyperbolic surfaces~\cite{Nava2014,Ragazzo2017}, etc.. They correspond to a scenario where the vorticity of a flow is concentrated in some well-defined points in space and enable a simpler description of the dynamics of such a system. 
Knowing the position and strength (or circulation) of each vortex in the system suffices to obtain the full velocity field at that instant and thus, knowing the dynamics of the vortices themselves in some time interval is enough to characterize the velocity field during that same time interval. The dynamics of these vortices follow ordinary differential equations that are akin to the ones in the gravitational $n$-body problem and their trajectories can be found analytically for some specific spacial configurations, such as relative equilibria \cite{Aref_Playground}. Consider a 2D inviscid flow in the complex plane that is described by $n$ point vortices located in the positions $z_{\alpha}=x_{\alpha} + iy_{\alpha}\,(\alpha=1,\dots,n)$. The motion of vortices is defined by the inial value problem determined by 

\begin{equation}
	\dot{z}_{\alpha}^*=\frac{1}{2\pi i}\sum_{\substack{\beta=1\\ \beta\neq \alpha}}^{n}\frac{\Gamma_{\beta}}{z_{\alpha}-z_{\beta}}, \quad \alpha=1,\dots,n,
 \label{eq:motion}
\end{equation}
and the initial positions of the vortices. Here, $\Gamma_{\alpha}$ is the circulation of the $\alpha^{th}$ vortex. Moreover, since the flow is incompressible, we know that the information of the velocity field can also be represented in a stream function formulation with 
\begin{equation}
	\psi\left(z,t\right) = -\frac{1}{4\pi}\sum_{\alpha=1}^{n}\Gamma_{\alpha}\log|z-z_{\alpha}\left(t\right)|.
\end{equation}
To retrieve the velocity field it suffices to consider the partial derivatives of this stream function: $\dot{x} = \dfrac{\partial\psi}{\partial y}$, $\dot{y} =-\dfrac{\partial\psi}{\partial x}$, or, in complex notation, $\dot{z}^* = 2i\,\dfrac{\partial\psi}{\partial z}$.

Studying the velocity field that arises from a system of point vortices can be thought of as studying the trajectories of passive particles that move in the velocity field. A passive particle is a particle that does not change the flow itself but is transported by it; it can be thought of as a point vortex with zero strength and, thus, its' motion follows the same differential equations as the vortices. This is analogous to the restricted $n+1$ body problem in celestial mechanics where one of the bodies of the system is assumed to have zero mass. More explicitly, the equation of motion for a passive particle is

\begin{equation}
	\dot{z}^*=\frac{1}{2\pi i}\sum_{\alpha=1}^{n}\frac{\Gamma_{\alpha}}{z-z_{\alpha}}.
 \label{eq:motion_particle}
\end{equation}

It is known that around each vortex there exists what it is usually called an island of stability, where the motion of passive particles is always regular and is mostly ruled by the associated vortex, while suffering little influence from the other $n-1$ vortices \cite{Babiano}. Outside of these islands, one expects the movement of particles to be chaotic for $\geq4$, except on very specific configurations. The boundaries of these islands constitute barriers to particle motion and, thus, to fluid transportation: particles inside of these regularity islands cannot be ejected from them and particles moving in the background between the vortices cannot enter the stability regions associated with each vortex.

In this work we will cover the simplest case $n=2$. Even though in this scenario every particle trajectory is regular, the presence of the vortices is enough to create rigid boundaries in the 2D plane from where particles are not able to escape. Furthermore, particles inside one region present similar trajectories and are generally different from particle trajectories in another one of these regions. The regions in which the vortices are contained are a simpler version of the stability islands that arise in the chaotic $n\geq4$ scenario and we present computations of their perimeter and area, as well as analytical expressions that define their boundary.

\section{Dynamics of Passive Particles on Two Point Vortex Systems}\label{sec:dynamics}

We now consider a system comprised by only two vortices located, at time $t$, in $z_1\left(t\right)$, $z_2\left(t\right)$ with circulations $\Gamma_1$ and $\Gamma_2$ respectively. There are two different regimes for the motion of the vortices depending on if the sum of their circulations is zero or non-zero. In the following two subsections, we will treat these two cases separately, describing the dynamics of passive particles in the plane and analyzing the boundaries that separate the plane in different regions from which particles cannot escape.

\subsection{Case $\quad \Gamma_1 +\Gamma_2 = 0$}

We consider two vortices with circulations $\Gamma_1 = - \Gamma_2 = k \in\mathbb{R}^+$ (if $k \in\mathbb{R}^-$, the motion of the system will simply happen in the opposite direction). It is known that in such a system, both vortices will move in straight lines, parallel to one another with a constant velocity \cite{Batchelor}. Without any loss of generality, it is possible to consider a system where the vortices are initially on the imaginary axis and separated by a distance $d\in\mathbb{R}^+$, i.e., $z_1\left(0\right) = \dfrac{d}{2}\,i,\,\, z_2\left(0\right) = -\dfrac{d}{2}\,i$.

The solution of equation (\ref{eq:motion}) for this case are thus 

\begin{equation}
\left\{
\begin{aligned}
	&z_1\left(t\right) = \frac{d}{2}\,i + \frac{k}{2\pi d}\,t\\
	&z_2\left(t\right) = -\frac{d}{2}\,i + \frac{k}{2\pi d}\,t
\end{aligned}
\right..
\end{equation}

The equation of motion for a passive particle in this system is non-autonomous and given by

\begin{equation}
\begin{aligned}
	\dot{z}^*&=\frac{k}{2\pi i}\left[\frac{1}{z-z_1\left(t\right)} - \frac{1}{z-z_2\left(t\right)}\right]\\
	&=\frac{k}{2\pi i}\left[\frac{1}{z-\frac{k}{2\pi d}t - \frac{d}{2}i} -\frac{1}{z-\frac{k}{2\pi d}t + 	\frac{d}{2}i} \right].
\end{aligned}
\end{equation}

Now, consider the change of coordinates $w = \dfrac{z}{d} - \dfrac{k}{2\pi d^2}\,t$. In this frame, the vortices are stationary and their positions are $w_1\left(t\right)=i/2$, $w_2\left(t\right)=-i/2$. The equation of motion of a passive particle is now governed by the autonomous differential equation

\begin{equation}
	\dot{w}^* = \frac{k}{2\pi d^2i}\left[\frac{1}{w-i/2} - \frac{1}{w+i/2} - i\right].
\label{eq:dw_a}
\end{equation}

Our aim is to describe the trajectories of passive particles. In this co-moving frame, stagnation points are the simplest possible trajectories and correspond to particles that move with the same velocity as the vortices in the original frame. These correspond to zeros of $\dot{w}$ and it is easy to see that the system has only two of them: $w=\pm\sqrt{3}/2$. These correspond to the Lagrangian points of the system and are real solutions of Eq. (\ref{eq:dw_a}), but they are not the only real solutions for this equation. Other real solutions $x\left(t\right)$ of Eq. (\ref{eq:dw_a}) must satisfy the equation


\begin{equation}
	\dot{x} = \frac{k}{2\pi d^2}\left[\frac{4}{4x^2+1}-1\right].
\label{eq:real_diff}
\end{equation}

It is possible to integrate this equation and find an implicit expression for the non-stationary solutions of Eq. (\ref{eq:real_diff}):

\begin{multline}
	\frac{1}{\sqrt3}\left[\log\left(\frac{2x\left(t\right) + \sqrt3}{2x\left(0\right) + \sqrt3}\right) - \log\left(\frac{2x\left(t\right) - \sqrt3}{2x\left(0\right) - \sqrt3}\right) \right] = x\left(t\right) - x\left(0\right) + \frac{k}{2\pi d^2}t, \\ x\left(0\right)\in\mathbb{R} \setminus\left\{\sqrt{3}/2,-\sqrt{3}/2\right\}.
\end{multline}

Notice that this equation defines three different solutions of Eq. (\ref{eq:real_diff}), depending on the value of $x\left(0\right)$ and that these three solutions and the stagnation points make up the full real axis. This means that the real axis is a physical barrier to the motion of passive particles, as, by the existence and uniqueness of solutions theorem for ordinary differential equations, there can not exist a passive particle trajectory that crosses it. Furthermore, we can see that as $t\to+\infty$:
\begin{itemize}
    \item if $x\left(0\right)<-\sqrt{3}/2$, then $x\left(t\right)\to-\infty$, 
    \item if $x\left(0\right)>\sqrt{3}/2$, then $x\left(t\right)\to\frac{\sqrt{3}}{2}^+$, and
    \item if $|x\left(0\right)|<\sqrt{3}/2$, then $x\left(t\right)\to\frac{\sqrt{3}}{2}^-$.
\end{itemize}
Thus, solutions in the real axis are attracted to $\sqrt{3}/2$ and repelled from $-\sqrt{3}/2$.

To understand the motion of passive particles in the rest of the 2D plane, we rewrite Eq. (\ref{eq:dw_a}) as a system of two real ODEs $\left(\dot{x},\dot{y}\right) = F\left(x,y\right)\,,$

\begin{equation}
\left\{
\begin{aligned}
	\dot{x} &= -&\frac{y-\frac{1}{2}}{x^2 + \left(y-\frac{1}{2}\right)^2} + &\frac{y+\frac{1}{2}}{x^2 + \left(y+\frac{1}{2}\right)^2} - 1\\
	\dot{y} &= &\frac{x}{x^2 + \left(y-\frac{1}{2}\right)^2} - &\frac{x}{x^2 + \left(y+\frac{1}{2}\right)^2}
\end{aligned}
\right.,
\label{eq:gamma0_2d}
\end{equation}
where we have rescaled time $t\to\dfrac{k}{2\pi d^2}\,t$ for simplicity.

We first analyze the stability of the Lagrangian points by checking the eigenvalues and eigenvectors of the linearization of $F\left(x,y\right)$ on those points. Defining the quantities $\ell_+\left(x,y\right) = x^2+\left(y + \frac{1}{2}\right)^2$, $\ell_-\left(x,y\right) = x^2+\left(y - \frac{1}{2}\right)^2$, we have

\begin{equation}
	DF\left(x,y\right) = \begin{pmatrix}
		-\frac{2x\left(y+\frac{1}{2}\right)}{\ell_+^2} + \frac{2x\left(y-\frac{1}{2}\right)}{\ell_-^2}  & \quad\quad\frac{1}{\ell_+} - \frac{1}{\ell_-} - \frac{2\left(y+\frac{1}{2}\right)^2}{\ell_+^2} + \frac{2\left(y-\frac{1}{2}\right)^2}{\ell_-^2}\\
		-\frac{1}{\ell_+} + \frac{1}{\ell_-} - \frac{2x^2}{\ell_+^2} - \frac{2x^2}{\ell_-^2} & \quad\quad\frac{2x\left(y+\frac{1}{2}\right)}{\ell_+^2} - \frac{2x\left(y-\frac{1}{2}\right)}{\ell_-^2}
\end{pmatrix},
\end{equation}

\begin{equation}
	DF\left(\pm\frac{\sqrt3}{2},0\right) = 
	\begin{pmatrix}
		\mp\sqrt3	& 0 \\
		0 & \pm\sqrt3
	\end{pmatrix}.
\end{equation}

Thus, the stagnation points are both saddle points. Locally, $\left(-\frac{\sqrt3}{2},0\right)$ attracts solutions on the $\left(0,1\right)$ direction and repels them on the $\left(1,0\right)$ direction, while we observe the exact opposite behavior in a neighborhood of $\left(\frac{\sqrt3}{2},0\right)$.

Using a stream function formulation, the system (\ref{eq:gamma0_2d}) can be further rewritten as $\dot{x} = \dfrac{\partial\psi}{\partial y}$, $\dot{y} =-\dfrac{\partial\psi}{\partial x}$. The stream function for this system is time-independent and can be written as

\begin{equation}
	\psi\left(x,y\right) = -\frac{1}{2}\left[\log\left(x^2+\left(y - \frac{1}{2}\right)^2\right) - \log\left(x^2+\left(y + \frac{1}{2}\right)^2\right) + 2y\right].
\end{equation}

Thus, the trajectories of passive particles can be identified as the level sets of $\psi\left(x,y\right)$ (this is only true if the stream function is time-independent~\cite{Pope}). Equivalently, this can be rewritten in a clearer manner as a modified stream function

\begin{equation}
	\widetilde{\psi}\left(x,y\right) = e^{-2\psi\left(x,y\right)} = \left[x^2+\left(y - \frac{1}{2}\right)^2\right] \left[x^2+\left(y + \frac{1}{2}\right)^2\right]^{-1} e^{2y}.
\end{equation}

Notice that for $\forall y : |y|>1/2,\,\,\,\dot{x}<1$. This, together with the fact that $\widetilde{\psi}\left(-\frac{\sqrt3}{2},0\right) = \widetilde{\psi}\left(\frac{\sqrt3}{2},0\right)$ means that there should be a trajectory of a passive particle linking the stagnation points other than the one on the real axis. These trajectories thus separate the infinite plane in 4 regions. A particle that starts its' trajectory in one region cannot cross into any of the other 3 regions. We can further assess that two of these regions are closed and each of them contains one of the vortices. In fact, it is possible to easily draw the phase diagram of the system with this information. The phase diagram is drawn on Figure \ref{fig:1}. 

\begin{figure}[h]
	\centering
	\includegraphics[width=\linewidth]{./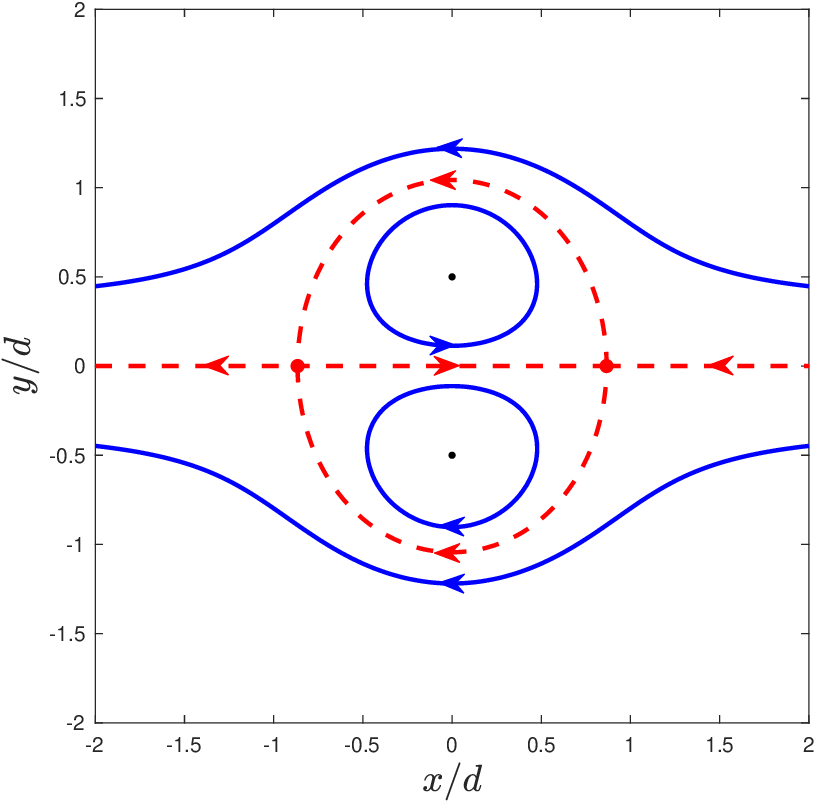}
	\caption[Phase Diagram.]{Phase diagram for a system of two point vortices in the co-moving frame. The vortices are located in $w_1=i/2$ and $w_1=-i/2$ and the circulations are such that $\Gamma_1 +\Gamma_2 = 0$. The red dots identify the stagnation points of the system. The red dashed lines correspond to special particle trajectories that constitute a boundary between regions of the infinite plane where particles present different types of trajectories (exemplified in blue lines).}
	\label{fig:1}
\end{figure}	

The boundary between the different regions can be defined implicitly using the modified stream function as $\widetilde{\psi}\left(x,y\right) = \widetilde{\psi}\left(0,0\right) = 1$, or, more explicitly:

\begin{equation}
	\left[x^2+\left(y - \frac{1}{2}\right)^2\right] e^{2y} = x^2+\left(y + \frac{1}{2}\right)^2.
\label{eq:frontier_a}
\end{equation}

The two closed regions defined by Eq. (\ref{eq:frontier_a}) are called the stability islands of the vortices and their existence has been shown in simulations numerous works \cite{Babiano,boatto_pierrehumbert_1999} for systems of any number of vortices.

It is also possible to obtain an expression for the area of each of these closed regions. Notice that the $x$- and $y$-axis are also symmetry axes for the solutions of the system. Thus, we only need to calculate the area enclosed in the region of interest that lies on the first quadrant, i. e., the area below the curve

\begin{equation}
	\xi= \left\{\left( x,y\right)\in \mathbb{R}^+\times\mathbb{R}^+ : \left[x^2+\left(y - \frac{1}{2}\right)^2\right] e^{2y} = \left[x^2+\left(y + \frac{1}{2}\right)^2\right]\right\}.
\end{equation}

It is possible to rewrite Eq. (\ref{eq:frontier_a}) as $x^2\left(y\right) = -y^2+y\,\coth\left(y\right)-\frac{1}{4}$ and by finding (numerically) some $\widetilde{y}>0$ such that $x^2\left(\widetilde{y}\right)=0$, the area of each stability island can be calculated as

\begin{equation}
	A = 2\int_{0}^{\widetilde{y}} \sqrt{-y^2+y\,\coth\left(y\right)-\frac{1}{4}}\,dy = 1.42505829613753 \pm 5\times10^{-14},
\end{equation}
and thus, for a system where the vortices are separated by a distance $d$, the area of each stability island is $A\,d^2\approx1.425\,d^2$.

It is also possible to compute the perimeter of each stability island. The length of the curve $\xi$ can be found as 

\begin{equation}
	L = \int_{0}^{\widetilde{y}} \sqrt{ 1 + \frac{\left(\coth\left(y\right) - y\left(1+\coth^2\left(y\right)\right)\right)^2}{4y\left(\coth\left(y\right) - y\right) - 1} }\,dy = 1.50587929704238 \pm 5\times10^{-14}.
\end{equation}
Thus, the perimeter of each island is $\left(2L+\sqrt{3}\right)d\approx4.744\,d$. Notice that in this case the shape of the islands does not depend on the circulation $k$. The results are summarized in Table~\ref{tb:tab1}.

\begin{table*}[t]
    \vspace{5mm}
\caption{Summary of stagnation points and stability island characterization for the case $\Gamma_1+\Gamma_2=0$.}
\label{tb:tab1}
\begin{center}
\begin{tabular}{l|cc}
\multicolumn{3}{c}{$\Gamma_1+\Gamma_2=0$}\\
\hline

Stagnation points & $\left(\pm\frac{\sqrt3}{2},0\right)$ & unstable saddle \\
Perimeter of each island & \multicolumn{2}{c}{$\approx4.744\,d$} \\
Area of each island & \multicolumn{2}{c}{$\approx1.425\,d^2$} \\
\end{tabular}
\end{center}
\vspace*{-4pt}
\end{table*}

	\subsection{Case $\quad \Gamma_1 +\Gamma_2 \neq 0$}

We now consider the scenario where the sum of the circulations of the two vortices is not equal to zero. It is known that in such a system, the vortices will rotate around their center of vorticity, which can be defined by

\begin{equation}
C_\text{vort} = \frac{\Gamma_1z_1\left(0\right) + \Gamma_2z_2\left(0\right) }{\Gamma_1+\Gamma_2},
\end{equation}
with a velocity 
\begin{equation}
\theta = \frac{\Gamma_1+\Gamma_2}{2\pi|d|^2},
\end{equation}
where $d = z_1\left(0\right) - z_2\left(0\right)$ \cite{Batchelor, Newton, Chorin}.

Introducing the parameter $\gamma = \dfrac{\Gamma_1}{\Gamma_1+\Gamma_2}$, the equations of motion of the two vortices can be written as 


\begin{equation}
\left\{
\begin{aligned}
z_1\left(t\right) &= C_\text{vort} + \left(1-\gamma\right)d\,e^{i\theta t}\\
z_2\left(t\right) &= C_\text{vort} -\gamma\, d\,e^{i\theta t}
\end{aligned}
\right. \quad.
\end{equation}

The equation of motion for a passive particle in this system is thus
\begin{equation}
\dot{z}^*=\frac{\Gamma_1+\Gamma_2}{2\pi i}\left[\frac{\gamma}{z-z_1\left(t\right)} + \frac{1-\gamma}{z-z_2\left(t\right)}\right].
\end{equation}	

Notice that the parameter $\gamma$ defines the nature of the system. If $\gamma\in\left]0,1\right[$, both vortices will have circulations with the same sign, while if $\gamma\in\mathbb{R}\setminus\left[0,1\right]$, then the vortices have circulations of opposite signs (but not symmetric).

Consider the change of coordinates $w = \left(\dfrac{z-C_\text{vort}}{d}\right)e^{i\theta t}$. This frame is rigidly rotating with the vortices and thus the vortices are stationary and positioned at $w_1\left(t\right)= 1-\gamma$ and $w_2\left(t\right)= -\gamma$. Furthermore, since $\dot{w}^*= \dfrac{\dot{z}^*}{d^*}e^{i\theta t} + i\,\theta\, w^*$, we obtain an autonomous equation of motion for a passive particle in this frame 

\begin{equation}
\dot{w}^*=\frac{\Gamma_1+\Gamma_2}{2\pi |d|^2 i}\left[\frac{\gamma}{w-\left(1-\gamma\right)} + \frac{1-\gamma}{w+\gamma} - w^*\right].
\end{equation}.

While not as trivial as in the previous case, finding analytical expressions for the stagnation points of this system is still possible. These points should satisfy

\begin{equation}
\dot{w}^*=0 \quad\Longleftrightarrow\quad \left(1-|w|^2\right)w + \gamma\left(1-\gamma\right)w^* = \left(1-2\gamma\right)\left(1-|w|^2\right).
\end{equation}	

Using $z=x+iy$ and writing a system of equations that both the real and imaginary parts of $z$ must satisfy, we get
%

\begin{equation}
\left\{
\begin{aligned}
\left[1-\left(x^2+y^2\right)\right]x + \gamma\left(1-\gamma\right)x &= \left(1-2\gamma\right)\left[1-\left(x^2+y^2\right)\right]\\
\left[1-\left(x^2+y^2\right)\right]y - \gamma\left(1-\gamma\right)y  &= 0
\end{aligned}
\right. \quad.
\end{equation}

From the second equation, if $y\neq0$, we get $1-\left(x^2+y^2\right) = \gamma\left(1-\gamma\right)$ and, using this in the first equation, we can conclude that
\begin{equation}
y\neq 0 \Rightarrow x = \frac{1-2\gamma}{2} \Rightarrow y = \pm\frac{\sqrt{3}}{2}.
\end{equation}	
Thus, we have two stagnation points: $p_3 = \left(\dfrac{1-2\gamma}{2}, \dfrac{\sqrt{3}}{2}\right)$ and $p_4 = \left(\dfrac{1-2\gamma}{2}, -\dfrac{\sqrt{3}}{2}\right)$.

In the case where $y$ is equal to zero, we arrive at a third order equation for $x$:
\begin{equation}
x^3 + \left(2\gamma-1\right)x^2 + \left[\gamma\left(\gamma-1\right)-1\right]x + \left(1-2\gamma\right) = 0,
\label{eq:cubic}
\end{equation}	
which has either one or three real roots~\cite{discriminant}, depending on if the sign of the discriminant is negative or positive.

\begin{equation}
\Delta = -\gamma\left(\gamma-1\right)\left(3\gamma^2-3\gamma+32\right).
\end{equation}
Since $3\gamma^2-3\gamma+32>0$,$\,\,\forall\gamma$, we can not have $\Delta=0$ - which would mean that (\ref{eq:cubic}) has a repeated root - since that would only happen if $\gamma$ is $0$ or $1$. We thus conclude that 
\begin{equation}
\text{sgn}\left(\Delta\right) = \left\{
\begin{aligned}
&+1, \quad \text{if} \quad \gamma\in\left]0,1\right[\\
&-1, \quad \text{if} \quad \gamma\in\mathbb{R} \backslash \left[0,1\right]
\end{aligned}
\right. \quad.
\end{equation}
Thus, if $\gamma\in\left]0,1\right[$, there are three stagnation points along the real axis that can be written in trigonometric form as 

\begin{multline}
	x_k = -\frac{2\gamma-1}{3} + \frac{2}{3}\sqrt{\gamma^2-\gamma+4}\cos\left[\frac{1}{3}\cos^{-1}\left(\frac{\left(2\gamma-1\right)\left(\gamma^2-\gamma+16\right)}{2\left(\gamma^2-\gamma+4\right)^{3/2}}\right) - \frac{2k\pi}{3}\right], \\ k=0,1,2.
	\label{eq:x_k}
\end{multline}

If $\gamma\in\mathbb{R} \setminus \left[0,1\right]$, the sole stagnation point along the real axis can be written as
\begin{multline}
	\overline{x}_0 = -\frac{2\gamma-1}{3} +\\+\frac{2}{3}\,\text{sgn}\left(\gamma\right)\sqrt{\gamma^2-\gamma+4}\cosh\left[\frac{1}{3}\cosh^{-1}\left(\text{sgn}\left(\gamma\right)\frac{\left(2\gamma-1\right)\left(\gamma^2-\gamma+16\right)}{2\left(\gamma^2-\gamma+4\right)^{3/2}}\right)\right].
	\label{eq:x_0}
\end{multline}
As such, in addition to $p_3$ and $p_4$, if $\gamma\in\left]0,1\right[$ we have the stagnation points $p_0 = \left(x_0,0\right)$, $p_1 = \left(x_1,0\right)$ and $p_2 = \left(x_2,0\right)$, and if $\gamma\in\mathbb{R} \setminus \left[0,1\right]$, we only have $\overline{p}_0 = \left(\overline{x}_0,0\right)$ as an additional stagnation point.

 In order to get more information about the stability of each of the stagnation points and the non stationary orbits of the system, we need to analyze the equations of motion. Writing them as a system of two real ODEs, we get

\begin{equation}
\left\{
\begin{aligned}
&\dot{x} = -\frac{\gamma\,y}{\left(x-\left(1-\gamma\right)\right)^2 + y^2} - \frac{\left(1-\gamma\right) y}{\left(x+\gamma\right)^2 + y^2} + y \\
&\dot{y} =  \frac{\left(x-\left(1-\gamma\right)\right)\gamma}{\left(x-\left(1-\gamma\right)\right)^2 + y^2} + \frac{\left(1-\gamma\right) \left(x+\gamma\right) }{\left(x+\gamma\right)^2 + y^2} - x
\end{aligned}
\right. \quad,
\end{equation}
where we rescaled time as $t\to\dfrac{\Gamma_1+\Gamma_2}{2\pi |d|^2}t$ for simplicity. 

The linearization of the flow $F$ defined by the previous equations is thus

\begin{multline}
DF\left(x,y\right) = \left(\begin{matrix}
\frac{2\left(1-\gamma\right)\left(x+\gamma\right)y}{\ell_+^2} + \frac{2\gamma\left(x-\left(1-\gamma\right)\right)y}{\ell_-^2}  \\
\frac{1-\gamma}{\ell_+} + \frac{\gamma}{\ell_-} - \frac{2\left(1-\gamma\right)\left(x+\gamma\right)^2}{\ell_+^2} - \frac{2\gamma\left(x-\left(1-\gamma\right)\right)^2}{\ell_-^2} - 1 \end{matrix} \right.\\
\left.\begin{matrix}
    -\frac{1-\gamma}{\ell_+} - \frac{\gamma}{\ell_-} + \frac{2\left(1-\gamma\right)y^2}{\ell_+^2} + \frac{2\gamma y^2}{\ell_-^2} + 1\\
- \frac{2\left(1-\gamma\right)\left(x+\gamma\right)y}{\ell_+^2} - \frac{2\gamma \left(x-\left(1-\gamma\right)\right)y}{\ell_+^2}
\end{matrix}\right),
\end{multline}
where, from now on, $\ell_+\left(x,y\right) = \left(x+\gamma\right)^2 + y^2$, $\ell_-\left(x,y\right) = \left(x-\left(1-\gamma\right)\right)^2 + y^2$. 

For the points $p_3$ and $p_4$ we have:

\begin{equation}
DF\left(\frac{1-2\gamma}{2},\pm\frac{\sqrt3}{2}\right) = \begin{pmatrix}
\pm\frac{\sqrt3}{2}\left(1-2\gamma\right)

& \frac{3}{2}\\

-\frac{1}{2}

& \mp\frac{\sqrt3}{2}\left(1-2\gamma\right)
\end{pmatrix}.
\end{equation}

So the eigenvalues $\lambda_1, \lambda_2$ of $DF$ in these stagnation points must satisfy

\begin{equation}
	\left\{
	\begin{aligned}
		\lambda_1\lambda_2&=-\frac{3}{4}\left(1-2\gamma\right)^2 + \frac{3}{4}\\
		\lambda_1+\lambda_2&=0
	\end{aligned}
	\right. \quad
\end{equation}
and thus $\lambda_1=-\lambda_2=\sqrt{3\gamma\left(\gamma-1\right)}$. Therefore, if $\gamma\in\left]0,1\right[$, then $\gamma\left(\gamma-1\right)<0$ and the stagnation points are both centers; if $\gamma\in\mathbb{R} \setminus \left[0,1\right]$, then $\gamma\left(\gamma-1\right)>0$, which means that the stagnation points are both unstable saddle points.

For the points $p_0$, $p_1$ and $p_2$, which are of the form $\left(\widetilde{x},0\right)$, we have

\begin{equation}
	DF\left(\widetilde{x},0\right) = 
	\begin{pmatrix}
		0 & 1-\varphi\\
		-1-\varphi & 0
	\end{pmatrix},
\end{equation}
where $\varphi = \frac{1-\gamma}{\left(\widetilde{x}+\gamma\right)^2} + \frac{\gamma}{\left(\widetilde{x}-\left(1-\gamma\right)\right)^2}$. So, the eigenvalues $\lambda_1,\lambda_2$ of $DF\left(\widetilde{x},0\right)$ in these stagnation points must satisfy

\begin{equation}
	\left\{
	\begin{aligned}
		\lambda_1\lambda_2&=\left(1+\varphi\right)\left(1-\varphi\right)\\
		\lambda_1+\lambda_2&=0
	\end{aligned}
	\right. \quad
\end{equation}
and thus $\lambda_1=-\lambda_2=\sqrt{-\left(1+\varphi\right)\left(1-\varphi\right)}$. 

\begin{theorem}\label{T0.1} $\left(1+\varphi\right)\left(1-\varphi\right)$ is negative if $\gamma\in\left]0,1\right[$ and positive if $\gamma\in\mathbb{R} \setminus \left[0,1\right]$.
\end{theorem}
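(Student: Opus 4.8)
The plan is to derive a single closed-form expression for $\varphi$ at a collinear stagnation point, using the cubic (\ref{eq:cubic}), and then to read off the sign of $1-\varphi^{2}=(1+\varphi)(1-\varphi)$ from it. Let $\widetilde{x}$ be a real root of (\ref{eq:cubic}) and set $a=\widetilde{x}+\gamma$ and $b=\widetilde{x}-(1-\gamma)$, the signed distances from $\widetilde{x}$ to the two vortex positions, so that $a=b+1$, while $b\neq 0$ and $b\neq -1$ because $\widetilde{x}$ is not a vortex position. The condition that $(\widetilde{x},0)$ be a stagnation point --- that is, (\ref{eq:cubic}), equivalently $\dot y=0$ on the real axis --- reads $\tfrac{1-\gamma}{a}+\tfrac{\gamma}{b}=\widetilde{x}=b+1-\gamma$. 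Solving this for $\gamma/b^{2}$ and substituting into $\varphi=\tfrac{1-\gamma}{a^{2}}+\tfrac{\gamma}{b^{2}}$ should give, after routine algebra,
\[
\varphi \;=\; 1+\frac{(1-\gamma)(b+2)}{(b+1)^{2}} .
\]
Hence $\varphi-1$ has the sign of $(1-\gamma)(b+2)$, and the theorem reduces to locating $b$ on the real line.

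I would treat $0<\gamma<1$ first. Here $1-\gamma>0$ and $\varphi>0$, so it suffices to prove $b>-2$ at every root of (\ref{eq:cubic}): then $\varphi>1$, hence $\varphi^{2}>1$ and $(1+\varphi)(1-\varphi)=1-\varphi^{2}<0$. A sign analysis of $h(x):=\tfrac{\gamma}{x-(1-\gamma)}+\tfrac{1-\gamma}{x+\gamma}-x$, whose zeros are exactly the real roots of (\ref{eq:cubic}), at its two poles $x=1-\gamma$, $x=-\gamma$ and at $\pm\infty$ shows that $h$ runs from $+\infty$ to $-\infty$ on each of the three intervals they determine; together with the fact (established above via the discriminant) that there are three real roots, this forces one root in each of $(-\infty,-\gamma)$, $(-\gamma,1-\gamma)$, $(1-\gamma,+\infty)$. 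The last two give $b\in(-1,0)$ and $b>0$ respectively, so there $b>-2$ automatically. For the leftmost root, where $\widetilde{x}<-\gamma$ and hence $a=b+1<0$, I would argue by contradiction: if $b\le -2$, then $a\le -1$, so $|a|\ge 1$ and $|b|\ge 2$ with $a,b<0$, whence $\widetilde{x}=\tfrac{\gamma}{b}+\tfrac{1-\gamma}{a}\ge -\tfrac{\gamma}{2}-(1-\gamma)=\tfrac{\gamma}{2}-1$; but also $\widetilde{x}=a-\gamma\le -1-\gamma$, and the two bounds together force $\gamma\le 0$, a contradiction.

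For $\gamma\in\mathbb{R}\setminus[0,1]$ I would first reduce to $\gamma>1$ using the reflection symmetry $(\gamma,\widetilde{x})\mapsto(1-\gamma,-\widetilde{x})$: a direct check shows it leaves $\varphi$ invariant and maps the root set of (\ref{eq:cubic}) for parameter $\gamma$ onto that for $1-\gamma$ (equivalently, $P_{1-\gamma}(-x)=-P_{\gamma}(x)$ for the cubic $P_{\gamma}$), while $\gamma<0$ corresponds to $1-\gamma>1$. So assume $\gamma>1$ and put $\delta=\gamma-1>0$. The same pole/infinity analysis of $h$, together with the uniqueness of the real root, places $\overline{x}_{0}$ in $(1-\gamma,+\infty)$, so $b>0$ and $a=b+1>1$. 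From the stagnation condition, $\tfrac{\gamma}{b^{2}}=\tfrac{b+2-\gamma}{b+1}=1-\tfrac{\delta}{b+1}$, and positivity of the left-hand side forces $\delta<b+1$. Then $\varphi=1-\tfrac{\delta(b+2)}{(b+1)^{2}}<1$ because the subtracted term is positive, and $\varphi>-1$ because $\tfrac{\delta(b+2)}{(b+1)^{2}}<\tfrac{(b+1)(b+2)}{(b+1)^{2}}=\tfrac{b+2}{b+1}<2$ (using $b>0$). Hence $\varphi^{2}<1$ and $(1+\varphi)(1-\varphi)=1-\varphi^{2}>0$.

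The only genuinely delicate point is the bound $b>-2$ for the leftmost collinear stagnation point when $0<\gamma<1$: it is the one place where a crude estimate on the fixed-point identity for $\widetilde{x}$ is used rather than a direct substitution, and it is sharp, since $b\to-2$ (and $\varphi\to 1$) as $\gamma\to 0^{+}$. Everything else is either the algebraic identity for $\varphi$ displayed above or an elementary count of the zeros of the rational function $h$.
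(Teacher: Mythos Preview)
Your argument is correct, and it takes a genuinely different route from the paper. The paper derives two separate identities, $1+\varphi$ in the form (\ref{eq:plus_varphi}) and $1-\varphi=\dfrac{\gamma}{(\widetilde{x}-(1-\gamma))^{2}}\,[\widetilde{x}+\gamma-2]$, and then bounds $\widetilde{x}$ using the explicit Cardano-type formulas (\ref{eq:x_k}) and (\ref{eq:x_0}), treating $\gamma>1$ and $\gamma<0$ separately with $\cos$- and $\cosh$-estimates. You instead condense everything into the single identity $\varphi=1+\dfrac{(1-\gamma)(b+2)}{(b+1)^{2}}$, locate the roots only qualitatively through the pole structure of $h$, and dispatch the $\gamma\notin[0,1]$ case with the reflection symmetry $(\gamma,\widetilde{x})\mapsto(1-\gamma,-\widetilde{x})$ and the bound $\delta<b+1$ coming from positivity of $\gamma/b^{2}$. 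The payoff of your approach is that it never touches the closed-form root expressions and makes the sign of $1-\varphi$ transparent from one algebraic relation; the paper's approach, on the other hand, reuses formulas already established in the text and keeps the two factors $1\pm\varphi$ visibly separate, which matches how the eigenvalue product was set up. Your contradiction argument for $b>-2$ at the leftmost root when $0<\gamma<1$ is the one nontrivial step, and it is sound (and sharp as $\gamma\to0^{+}$, as you note).
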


\begin{proof}
Using the the fact that $\dot{y}=0$ on the stagnation points and their analytical expressions, it is possible to infer the signs of $\left(1+\varphi\right)$ and $\left(1-\varphi\right)$ for every $\gamma$. We can write 

\begin{equation}
	\begin{aligned}
		1 + \varphi &= 1 + \frac{1-\gamma}{\left(\widetilde{x}+\gamma\right)^2} + \frac{\gamma}{\left(\widetilde{x}-\left(1-\gamma\right)\right)^2} \\
		&= 1 + \frac{1}{\left(\widetilde{x}+\gamma\right)^2} + \gamma\left[ \frac{1}{\left(\widetilde{x}-\left(1-\gamma\right)\right)^2} - \frac{1}{\left(\widetilde{x}+\gamma\right)^2} \right] \\
		&= 1 + \frac{1}{\left(\widetilde{x}+\gamma\right)^2} + \frac{\left[2\left(\widetilde{x}+\gamma\right)-1\right]\gamma}{\left(\widetilde{x}-\left(1-\gamma\right)\right)^2\left(\widetilde{x}+\gamma\right)^2}.
	\end{aligned}
	\label{eq:plus_varphi}
\end{equation}

Since $\dot{y}=0$ on the stagnation points, we have

\begin{equation}
	  \frac{\gamma}{\widetilde{x}-\left(1-\gamma\right)} + \frac{1-\gamma }{\widetilde{x}+\gamma} - \widetilde{x} = 0 \Leftrightarrow \frac{1-\gamma }{\left(\widetilde{x}+\gamma\right)^2} = 1 - \frac{\gamma}{\widetilde{x}+\gamma} - \frac{\gamma}{\left(\widetilde{x}+\gamma\right)\left(\widetilde{x}-\left(1-\gamma\right)\right)}.
\end{equation}

Using this on the expression for $1-\varphi$, we can write

\begin{equation}
	\begin{aligned}
	1 - \varphi &= 1 - \frac{1-\gamma}{\left(\widetilde{x}+\gamma\right)^2} - \frac{\gamma}{\left(\widetilde{x}-\left(1-\gamma\right)\right)^2} \\
	&= \frac{\gamma }{\widetilde{x}+\gamma} + \frac{\gamma}{\left(\widetilde{x}+\gamma\right)\left(\widetilde{x}-\left(1-\gamma\right)\right)} - \frac{\gamma}{\left(\widetilde{x}-\left(1-\gamma\right)\right)^2} \\
	&= \frac{\gamma}{\left(\widetilde{x}-\left(1-\gamma\right)\right)^2}\left[ \frac{\left(\widetilde{x}-\left(1-\gamma\right)\right)^2 }{\widetilde{x}+\gamma} + \frac{\widetilde{x}-\left(1-\gamma\right)}{\widetilde{x}+\gamma} - 1 \right] \\
	&= \frac{\gamma}{\left(\widetilde{x}-\left(1-\gamma\right)\right)^2}\left[\widetilde{x}+\gamma-2\right].
\end{aligned}
\end{equation}

If $\gamma\in\left]0,1\right[\,,$ it is easy to see that $1 + \varphi > 0\,,$ since it is the sum of three positive quantities. The sign of $1-\varphi$ will be the same as the sign of $\widetilde{x}+\gamma-2$. Using Equation (\ref{eq:x_k}), we have $\widetilde{x}+\gamma-2\leq -\frac{2\gamma-1}{3} + \frac{2}{3}\sqrt{\gamma^2-\gamma+4} + \gamma - 2 = \frac{\gamma-5}{3} + \frac{2}{3}\sqrt{\gamma^2-\gamma+4} < 0$. Thus, $\left(1+\varphi\right)\left(1-\varphi\right) < 0$. 

If $\gamma\in\mathbb{R} \setminus \left[0,1\right]$, we first need to check the signs of $\left[2\left(\widetilde{x}+\gamma\right)-1\right]\gamma$ and $\left(\widetilde{x}+\gamma-2\right)\gamma$. If $\gamma>1$, from Equation (\ref{eq:x_0}) we can write $2\left(\widetilde{x}+\gamma\right)-1 \geq -\frac{4\gamma-2}{3} +\frac{4}{3}\sqrt{\gamma^2-\gamma+4} + 2\gamma - 1 = \frac{2\gamma-1}{3} +\frac{4}{3}\sqrt{\gamma^2-\gamma+4}> 3 > 0$. Furthermore, $\widetilde{x}+\gamma-2 \geq -\frac{2\gamma-1}{3} + \frac{2}{3}\sqrt{\gamma^2-\gamma+4} + \gamma - 2 = \frac{\gamma-5}{3} + \frac{2}{3}\sqrt{\gamma^2-\gamma+4} > 0$. If $\gamma<0$, we have $2\left(\widetilde{x}+\gamma\right)-1 \leq \frac{2\gamma-1}{3} -\frac{4}{3}\sqrt{\gamma^2-\gamma+4}<-3<0$ and $\widetilde{x}+\gamma-2 \leq -\frac{2\gamma-1}{3} - \frac{2}{3}\sqrt{\gamma^2-\gamma+4} + \gamma - 2 = \frac{\gamma-5}{3} - \frac{2}{3}\sqrt{\gamma^2-\gamma+4} < -3$. Thus, we have  $\left[2\left(\widetilde{x}+\gamma\right)-1\right]\gamma>0$ and $\left(\widetilde{x}+\gamma-2\right)\gamma>0$ for both cases. Hence, $1+\varphi>0$ because every term in the right-hand side of Equation (\ref{eq:plus_varphi}) is positive and $1-\varphi>0$ because it has the same sign as $\left(\widetilde{x}+\gamma-2\right)\gamma>0$. Thus, $\left(1+\varphi\right)\left(1-\varphi\right) > 0\,.$
\end{proof}

Thus, if $\gamma\in\left]0,1\right[$, the three stagnation points in the real axis are unstable saddle points and if $\gamma\in\mathbb{R} \setminus \left[0,1\right]$, the single stagnation point in the real axis is a center. The information about the stagnation points in both scenarios is summarized in Table \ref{tb:stag points}.

\begin{table*}[t]
    \vspace{5mm}
\caption{Stagnation points for the system depending on the parameter $\gamma$.}
\label{tb:stag points}
\begin{center}
\begin{tabular}{ll|lll}
\multicolumn{2}{c}{$\gamma\in\left]0,1\right[$} & \multicolumn{2}{c}{$\gamma\in\mathbb{R} \setminus \left[0,1\right]$}\\
\hline
$\left(x_0,0\right)$, see (\ref{eq:x_k}) & unstable saddle & $\left(\overline{x}_0,0\right)$, see (\ref{eq:x_0}) & center \\

$\left(x_1,0\right)$, see (\ref{eq:x_k}) & unstable saddle & \multicolumn{2}{c}{$--$}\\

$\left(x_2,0\right)$, see (\ref{eq:x_k}) & unstable saddle & \multicolumn{2}{c}{$--$}\\

$\left(\frac{1-2\gamma}{2},\frac{\sqrt{3}}{2}\right)$ & center & $\left(\frac{1-2\gamma}{2},\frac{\sqrt{3}}{2}\right)$ & unstable saddle\\

$\left(\frac{1-2\gamma}{2},-\frac{\sqrt{3}}{2}\right)$ & center & $\left(\frac{1-2\gamma}{2},-\frac{\sqrt{3}}{2}\right)$ & unstable saddle\\

\end{tabular}
\end{center}
\vspace*{-4pt}
\end{table*}

Once again, it is possible to write a time-independent stream function for this system:
\begin{equation}
\psi\left(x,y\right) = -\frac{1}{2}\left[\gamma\log\left(\left(x-\left(1-\gamma\right)\right)^2 + y^2\right) + \left(1-\gamma\right)\log\left(\left(x+\gamma\right)^2 + y^2\right) - \left(x^2+y^2\right)\right].
\end{equation}

The trajectories of passive particles are thus the level sets of this stream function, or, equivalently, the level sets of
\begin{equation}
\widetilde{\psi}\left(x,y\right) = e^{-2\psi\left(x,y\right)} = \left[\left(x-\left(1-\gamma\right)\right)^2 + y^2\right]^{\gamma}  \left[\left(x+\gamma\right)^2 + y^2\right]^{1-\gamma} e^{-\left(x^2+y^2\right)},
\label{eq:mod_stream}
\end{equation}
which, for the symmetric case $\gamma=\frac{1}{2}$, can be thought of Cassini ovals \cite{Basset} that have been deformed by the rotational movement of the system: the curves are compressed along the $y$ axis and elongated along the $x$ axis due to the presence of the exponential term $e^{-\left(x^2+y^2\right)}$ in $\widetilde{\psi}\left(x,y\right)$.

Using the stream function and the information on the stagnation points, it is possible to draw the phase diagram of this system for any $\gamma$. We see once again that there exists a stability island around each vortex. The boundary of these islands can be characterized as a part of the curve that is implicitly defined by $\widetilde{\psi}\left(x,y\right) = \widetilde{\psi}\left(x_1,0\right)$ in the $\gamma\in\left]0,1\right[$ case, or $\widetilde{\psi}\left(x,y\right) = \widetilde{\psi}\left(p_3\right)$ in the $\gamma\in\mathbb{R} \setminus \left[0,1\right]$ case. In fact, the curves resulting from these equations separate the 2D plane in various regions. Typically, two passive particles inside of the same region will have similar trajectories, while two particles that are located in two different regions will have distinct trajectories. For instance, particles inside the stability island of one vortex will rotate around that island's vortex; particles that are far away from the vortices will have trajectories similar to the trajectory of a particle rotating around a single vortex in the vorticity center of the system, while other passive particles can have more complex trajectories where they orbit both vortices or none of them. As examples, we plot the phase diagrams for $\gamma=\frac{1}{2}$ in Figure \ref{fig:2}, $\gamma=\frac{3}{4}$ in Figure \ref{fig:3} and $\gamma=\frac{4}{3}$ in Figure \ref{fig:4}. Notice that there exists a significant distinction in the dynamics of the system depending on the value of $\gamma$ and that the case $\gamma\in\mathbb{R} \setminus \left[0,1\right]$ shows some dynamical similarities to the case $\Gamma_1 +\Gamma_2 = 0$.

Due to the complexity of the expression for the stream function and its' dependence on the parameter $\gamma$, it is not possible to obtain an expression for the area or perimeter of the stability islands of the vortices. However, using (\ref{eq:mod_stream}) it is possible to compute them numerically and the results are shown in Figure \ref{fig:5}, where we plot these quantities for each of the two stability islands as well as their sum for different values of $\gamma$.

\begin{figure}
	\centering
	\includegraphics[width=\linewidth]{./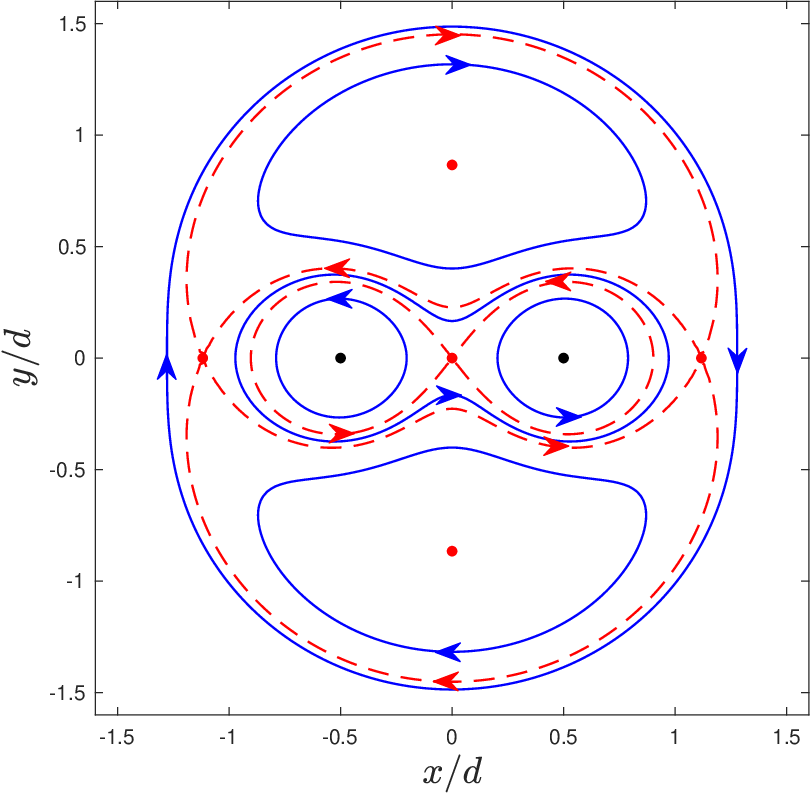}
	\caption[Phase Diagram.]{Phase diagram for a system of two point vortices in the co-moving frame. The vortices are located in $w_1=1/2$ and $w_1=-1/2$ and the circulations are such that $\Gamma_1 +\Gamma_2 \neq 0$ and $\gamma=1/2$. The red dots identify the stagnation points of the system. The red dashed lines correspond to special particle trajectories that constitute a boundary between regions of the infinite plane where particles present different types of trajectories (exemplified in blue lines).}
	\label{fig:2}
\end{figure}	

\begin{figure}
	\centering
	\includegraphics[width=\linewidth]{./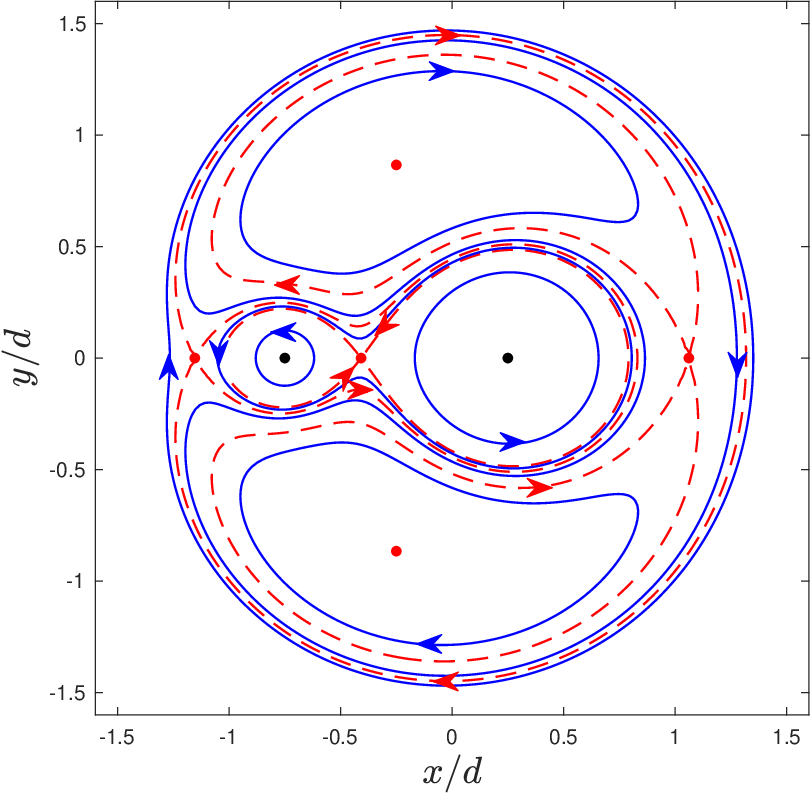}
	\caption[Phase Diagram.]{Phase diagram for a system of two point vortices in the co-moving frame. The vortices are located in $w_1=1/4$ and $w_1=-3/4$ and the circulations are such that $\Gamma_1 +\Gamma_2 \neq 0$ and $\gamma=3/4$. The red and black dashed lines correspond to special particle trajectories that constitute a boundary between regions of the infinite plane where particles present different types of trajectories (exemplified in blue lines).}
	\label{fig:3}
\end{figure}	

\begin{figure}
	\centering
	\includegraphics[width=\linewidth]{./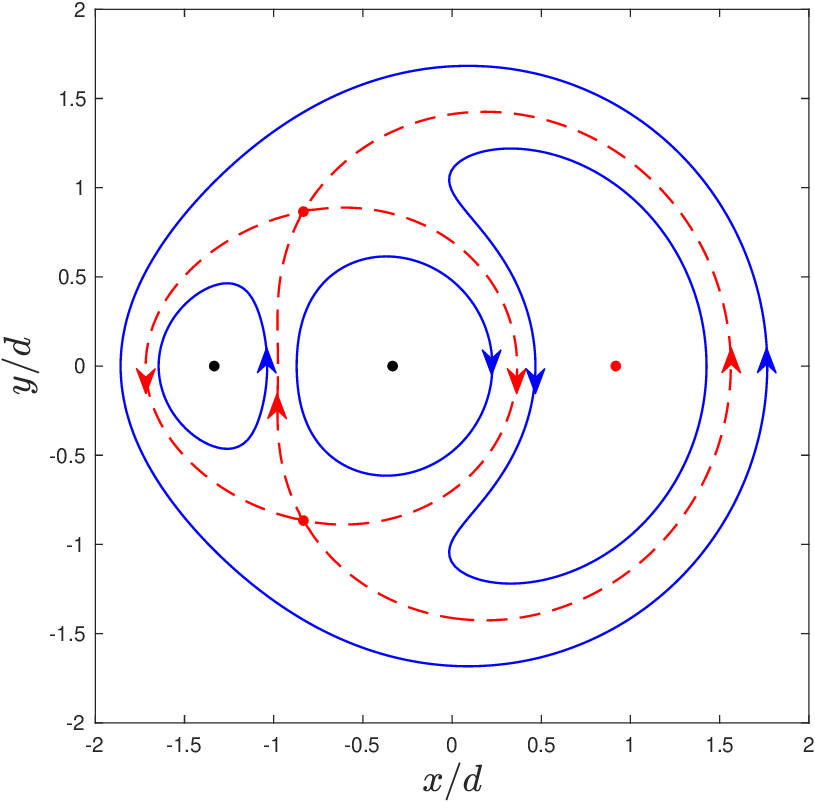}
	\caption[Phase Diagram.]{Phase Diagram for a system of two point vortices in the co-moving frame. The vortices are located in $w_1=-1/2$ and $w_1=-3/2$ and the circulations are such that $\Gamma_1 +\Gamma_2 \neq 0$ and $\gamma=4/3$. The red and black dashed lines correspond to special particle trajectories that constitute a boundary between regions of the infinite plane where particles present different types of trajectories (exemplified in blue lines).}
	\label{fig:4}
\end{figure}	

\begin{figure}
	\centering
	\includegraphics[width=0.45\linewidth]{./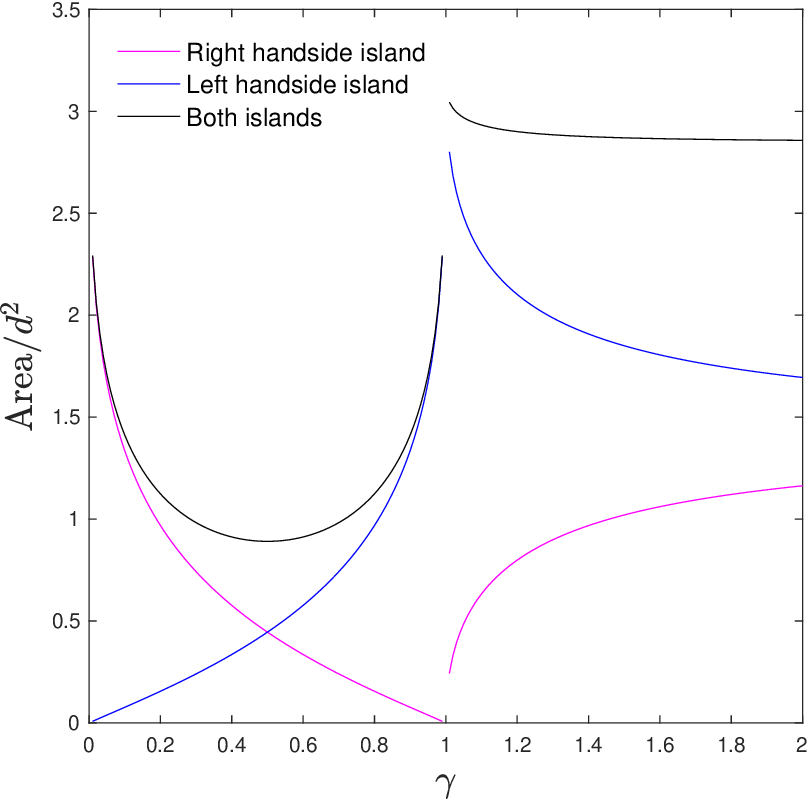}
	\includegraphics[width=0.45\linewidth]{./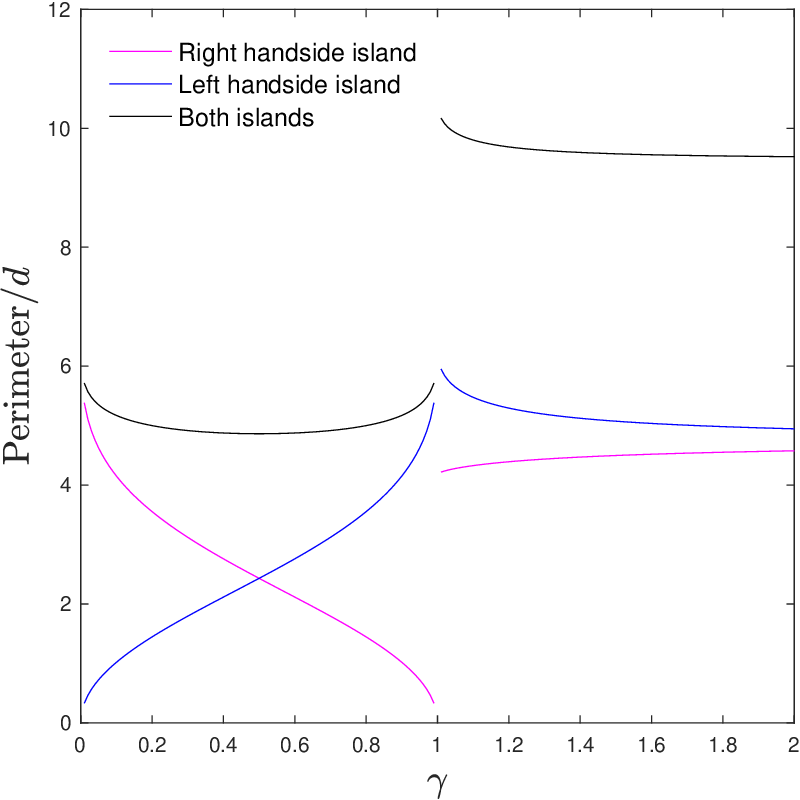}
	\caption[Phase Diagram.]{Area and Perimiter of the islands around the vortices for $0<\gamma<2$.}
	\label{fig:5}
\end{figure}	

\section{Conclusion}

Inspired by studies that have shown the existence of stability islands around point vortices in 2D flows, we study and characterize these islands for systems of two point vortices. Passive particles in a flow cannot cross the boundary of these stability islands and on the inside of the islands, particles are mainly advected by the vortex on the inside of that island, having a regular trajectory. 

In the scenario where the total circulation of the system is zero ($\Gamma_1 +\Gamma_2 = 0$), the dynamics can be completely characterized without any parameter dependence. In the frame that is co-moving with the vortices, passive particles on the outside of the stability islands will eventually move away from the vortices and never approach them again. Passive particles on the inside of he island of stability of each vortex will describe closed trajectories around them indefinitely. Due to the symmetries in the system, both these islands have the same perimeter and area, which can be calculated with arbitrary precision. 

If the total circulation is not zero, however, the dynamics are completely characterized by a parameter $\gamma=\dfrac{\Gamma_1}{\Gamma_1+\Gamma_2}$. If $\gamma\in\left]0,1\right[$, in the frame that is co-moving with the vortices, passive particles can have a multitude of trajectories: if they are on the inside of a stability island, they will describe closed trajectories around that island's associated vortex; on a certain vicinity of the vortices, particles may orbit both vortices in closed loops or describe closed trajectories without orbiting any of the vortices; and if a particle is sufficiently far away from the vortices, it will describe a closed trajectory around both of the vortices. The area and perimeter of the stability islands cannot be computed analytically due to the complexity of the mathematical expressions involved but numerical computations clearly show that these quantities scale with the strength of the vortex located on the inside of the island.

If $\gamma\in\mathbb{R} \setminus \left[0,1\right]$, in the frame that is co-moving with the vortices, particles describe closed, regular trajectories on the inside of the stability islands of the vortices once again. There is a third closed region in the 2D plane where particles describe closed trajectories without orbiting any of the vortices, while on the rest of the 2D plane, every particle will orbit both vortices in a closed trajectory. Analytical computation of the area and perimeter of the stability islands is once again impossible due to the complexity of the mathematical expressions but numerical computations show that as $\gamma\to\infty$ (equivalently $\Gamma_1\to-\Gamma_2$), the values for these quantities approach the values obtained in the $\Gamma_1 +\Gamma_2 = 0$ analysis, as one would expect.

\vskip6pt
\section*{Acknowledgements}
This work was supported by (i) CMUP, member of LASI, which is financed by national funds through FCT – Fundação para a Ciência e a Tecnologia, I.P., under the project with reference UIDB/00144/2020, and (ii) project SNAP NORTE-01-0145-FEDER-000085, co-financed by the European Regional Development Fund (ERDF) through the North Portugal Regional Operational Programme (NORTE2020) under Portugal 2020 Partnership Agreement. GM thanks grant ref. PD/BD/150537/2019 through FCT.


\vskip2pc

\bibliographystyle{RS} 
\bibliography{RSPA_Author_tex.bib} 

\begin{thebibliography}{99}

\bibitem{Helmholtz}
Helmholtz H. 1858  Über {I}ntegrale der hydrodynamischen {G}leichungen, welche
  den {W}irbelbewegungen entsprechen.. \textbf{1858}, 25--55.

\bibitem{Kelvin1869}
Thomson (Lord~Kelvin) W. 1869  On vortex motion.. {\em Trans. R. Soc. Edin}
  \textbf{25}, 217–260.

\bibitem{Kirchhoff1876}
Kirchhoff GR. 1876  Vorlesungenb\"er mathematische Physik.. {\em Mechanik}.

\bibitem{Makoto2006}
Umeki M. 2006  Clustering Analysis of Periodic Point Vortices with the $L$
  Function. {\em Journal of the Physical Society of Japan} \textbf{76}.
(\href{http://dx.doi.org/10.1143/JPSJ.76.043401}{10.1143/JPSJ.76.043401})

\bibitem{Stremler2010}
{Stremler} MA. 2010  {On relative equilibria and integrable dynamics of point
  vortices in periodic domains}. {\em Theoretical and Computational Fluid
  Dynamics} \textbf{24}, 25--37.
(\href{http://dx.doi.org/10.1007/s00162-009-0156-z}{10.1007/s00162-009-0156-z})

\bibitem{crowdy2006point}
Crowdy D. 2006  Point vortex motion on the surface of a sphere with
  impenetrable boundaries. {\em Physics of Fluids} \textbf{18}, 036602.

\bibitem{newton2010n}
Newton PK. 2010  The N-vortex problem on a sphere: geophysical mechanisms that
  break integrability. {\em Theoretical and Computational Fluid Dynamics}
  \textbf{24}, 137--149.

\bibitem{mokhov2020point}
Mokhov II, Chefranov SG, Chefranov AG. 2020  Point vortices dynamics on a
  rotating sphere and modeling of global atmospheric vortices interaction. {\em
  Physics of Fluids} \textbf{32}, 106605.

\bibitem{Nava2014}
Nava-Gaxiola C, Montaldi J. 2014  Point vortices on the hyperbolic plane. {\em
  Journal of Mathematical Physics} \textbf{55}.
(\href{http://dx.doi.org/10.1063/1.4897210}{10.1063/1.4897210})

\bibitem{Ragazzo2017}
Ragazzo C. 2017  The motion of a vortex on a closed surface of constant
  negative curvature. {\em Proceedings of the Royal Society A: Mathematical,
  Physical and Engineering Science} \textbf{473}, 20170447.
(\href{http://dx.doi.org/10.1098/rspa.2017.0447}{10.1098/rspa.2017.0447})

\bibitem{Aref_Playground}
Aref H. 2007  Point vortex dynamics: A classical mathematics playground. {\em
  Journal of Mathematical Physics} \textbf{48}, 065401.
(\href{http://dx.doi.org/10.1063/1.2425103}{10.1063/1.2425103})

\bibitem{Babiano}
Babiano A, Boffetta G, Provenzale A, Vulpiani A. 1994  Chaotic advection in
  point vortex models and two‐dimensional turbulence. {\em Phys. of Fluids}
  \textbf{6}, 2465--2474.
(\href{http://dx.doi.org/10.1063/1.868194}{10.1063/1.868194})

\bibitem{Batchelor}
Batchelor GK. 2000 {\em An Introduction to Fluid Dynamics}.
Cambridge Mathematical Library. Cambridge University Press.
(\href{http://dx.doi.org/10.1017/CBO9780511800955}{10.1017/CBO9780511800955})

\bibitem{Pope}
Pope SB. 2000 {\em Turbulent Flows}.
Cambridge University Press.
(\href{http://dx.doi.org/10.1017/CBO9780511840531}{10.1017/CBO9780511840531})

\bibitem{boatto_pierrehumbert_1999}
BOATTO S, PIERREHUMBERT RT. 1999  Dynamics of a passive tracer in a velocity
  field of four identical point vortices. {\em Journal of Fluid Mechanics}
  \textbf{394}, 137–174.
(\href{http://dx.doi.org/10.1017/S0022112099005492}{10.1017/S0022112099005492})

\bibitem{Newton}
Newton P. 2001 {\em The N-Vortex Problem: Analytical Techniques}.
Appl. Math. Sci. Springer New York.

\bibitem{Chorin}
Chorin A. 1994 {\em Vorticity and Turbulence}.
Appl. Math. Sci. Springer.

\bibitem{discriminant}
Nickalls RWD, Dye RH. 1996  The Geometry of the Discriminant of a Polynomial.
  {\em The Mathematical Gazette} \textbf{80}, 279--285.

\bibitem{Basset}
Basset AB. 1901 {\em An elementary treatise on cubic and quartic curves}.
University of Michigan Library.

\end{thebibliography}

\end{document}